\documentclass[aps,prx,twocolumn,superscriptaddress,nofootinbib]{revtex4-2}

\usepackage{amsmath,amssymb,amsthm}
\usepackage{graphicx}
\usepackage[hidelinks]{hyperref}

\newtheorem{proposition}{Proposition}
\newtheorem{remark}{Remark}
\newcommand{\kdot}{\dot{\kappa}}
\newcommand{\R}{\mathbb{R}}

\begin{document}

\title{Untrainable elements determine what physical
learning remembers}

\author{Bijaya Dangol}
\affiliation{Independent researcher}

\begin{abstract}
Physical learning rules such as equilibrium propagation (EP), coupled
learning (CL), and adjoint coupled learning (AL) train resistive networks
through local measurements. Trainable edges outnumber output constraints,
so training reaches a manifold of solutions and the learned function is
decided by where on that manifold it lands. Two properties of these
systems could decide it, and they have not been separated: the circuit's
invariance under rescaling every conductance, and the rule's conservation
of the mass $K = \tfrac12\sum_e \kappa_e^2$. We separate them. In a
circuit whose every element is trainable, all three vector fields are
homogeneous in the conductances, so the initialization scale is provably
inert. An element the rule does not adjust is not rescaled with the
others and breaks that homogeneity whatever its constitutive law. Across
twenty topologies the learned function then moves with the initialization
scale by a median of twelve percent with fixed rectifiers and eight
percent with fixed linear resistors, against $3\times10^{-8}$ when every
element is trainable, and a single fixed rectifier already produces the
whole effect. The conservation law is not what protects the function: AL,
which we prove dissipates the mass at exactly twice its own loss,
remembers its initialization as strongly as the rules that conserve it,
and the memory survives in runs where $K$ is conserved to $10^{-4}$.
Two dissociations make the separation sharp: raising the number of fixed
elements from one to eight multiplies the conservation drift by five
thousand and leaves the memory unchanged, while the all-trainable circuit
under AL drifts comparably and remembers nothing at all. What
the rule's conservation structure does control is the quality of the
solution reached. At matched training loss AL is worse than EP and CL
in four of six small circuits, by a median of three to seven percent, but
the ordering is not stable across training-loss checkpoints and does not
reproduce at fifty nodes, where the three rules are indistinguishable. Physical learning therefore carries two
independent inductive biases, one belonging to the circuit and one to the
rule, and only the first is a memory of how the device was built.
\end{abstract}

\maketitle

\section{Introduction}
\label{sec:intro}

A network of adjustable resistors can be trained to compute: contrastive
local rules compare two equilibrium states of the circuit, a free state
and a nudged state, and adjust each conductance using only quantities
measurable at that edge
\cite{stern2021supervised,stern2023learning}. Equilibrium
propagation (EP) nudges by applying a small force proportional to the
output error \cite{scellier2017equilibrium}; coupled learning (CL) clamps
the outputs a small step toward their targets \cite{stern2021supervised}.
Both rules have trained laboratory networks
\cite{dillavou2022demonstration,dillavou2024machine}, and both are
candidates for learning hardware whose energy cost is set by physics
rather than by digital arithmetic. They also inherit a question posed
broadly for adaptive matter: what a driven physical system retains of
its history. In driven many-body systems that retention is quantified
in the matter itself, where the information a system holds about its
drive is recoverable from its trajectories more fully than from work or
dissipation alone \cite{zhong2021machine}. For a trained circuit the
history is its initialization, and we ask what of it the learned
function keeps.

A convergence theory exists for these rules: in linear circuits EP and CL
converge locally to a solution under a coercivity condition
\cite{mcginnis2026coercivity},
and in the small-nudge limit both exactly conserve the conductance mass
$K = \tfrac12 \sum_e \kappa_e^2$ \cite{mcginnis2026conservation}. The same
analysis introduces a third rule, adjoint coupled learning (AL), which
restores the gradient-flow structure that CL lacks
\cite{mcginnis2026coercivity}. Convergence, however, is only half of
learning. Trainable edges outnumber output constraints, so the parameter
configurations that satisfy a training objective form a manifold, and a
convergence theorem says that training reaches this manifold, not where
it lands. The landing point is what determines the learned function on
inputs never seen in training. For gradient descent in deep networks that
selection question is the subject of the implicit-bias literature, where
the scale of the initialization controls which minimizer the trajectory
reaches \cite{chizat2019lazy,woodworth2020kernel}. For physical learning
it has been asked of power, where the initial conductances are chosen to
land on low-power members of the solution set
\cite{dillavou2022demonstration,stern2024training}, and of the
microscopic structure a trained network keeps of its task
\cite{stern2025physical,guzman2025microscopic}. It has not been asked of
the conservation law, or of the learning rule itself as a source of
inductive bias.

Two properties of these systems are candidates for deciding the landing
point, and they are easy to confuse because both concern the
conductances. The first belongs to the circuit: rescaling every
conductance leaves the input-output map unchanged, so the scale of the
initialization is invisible to the task. The second belongs to the rule:
EP and CL conserve $K$ exactly along the continuous-time flow, and the
conservation survives the removal of edges whose conductance reaches zero
\cite{mcginnis2026conservation}, so the initialization norm is never
forgotten. Gradient flow in deep networks has conserved quantities of the
second kind too, tied to architectural symmetries
\cite{kunin2021neural,marcotte2023abide}, but practical training breaks
them through discrete steps, weight decay and normalization, and what
selection survives is a property of the trajectory rather than of an
exact invariant. Physical learning sits at the opposite pole, which makes
it the setting in which the two candidates can be told apart.

Telling them apart takes an experiment and three preliminary facts. At a
single output the comparison is empty, because the EP and CL flows have
identical orbits up to a time reparametrization
(Remark~\ref{rem:equiv}), which is why no rule-dependent bias has been
reported and why any test must start at two outputs. The third rule
supplies the probe: AL dissipates $K$ at exactly twice its own loss
(Proposition~\ref{prop:dissipation}), so it has the circuit's invariance
without the rule's conservation. And that invariance is exact but
conditional: with every element trainable, the scaling map
$\kappa \mapsto c\kappa$ sends learning orbits to learning orbits for all
three rules (Proposition~\ref{prop:homogeneity}), while an element the
rule does not adjust breaks it. If the conservation law were what fixed
the learned function, AL would forget what EP and CL remember.

It does not. Over twenty topologies the learned function moves with the
initialization scale by a median of twelve percent with fixed rectifiers
and eight percent with fixed linear resistors, against $3\times10^{-8}$
when every element is trainable, and one fixed rectifier is enough
(Sec.~\ref{sec:memory}). AL's memory is the same as EP's to within the
ensemble spread, and the effect survives where $K$ is conserved to
$10^{-4}$. The memory is the circuit's, not the rule's. The rule's
conservation structure governs something else: at matched training loss
AL selects measurably worse solutions than EP and CL in four of six small
circuits, an effect we do not reproduce at fifty nodes
(Sec.~\ref{sec:generalization}).

\section{Setup}
\label{sec:setup}

We follow the framework and notation of
Refs.~\cite{mcginnis2026coercivity,mcginnis2026conservation}. A circuit is
a connected graph on $N$ nodes with $M$ trainable edges; edge $e$ carries
a conductance $\kappa_e > 0$ and an incidence vector $d_e \in \R^N$. Node
voltages $x \in \R^N$ minimize the co-content, which for a linear circuit
is the dissipated power
\begin{equation}
  G(x;\kappa) = \tfrac12 \langle x, L(\kappa)\, x\rangle,
  \qquad
  L(\kappa) = \sum_{e} \kappa_e\, d_e d_e^{\top}.
\end{equation}
Inputs are imposed as voltage constraints $P^\top x = v$ on a set of input
nodes that includes ground, and outputs are read at disjoint nodes through
$Q^\top x$ with $Q \in \R^{N\times O}$. The free state is
$x_0 = \arg\min \{G(x;\kappa) : P^\top x = v\}$, and for a target $w$ the
error is $r = Q^\top x_0 - w$ with loss $\Phi = \tfrac12 \|r\|^2$.

The three rules perturb the output nodes in different ways. EP adds a
force $\eta\, Qr$ to the energy. CL imposes the constraint
$Q^\top x_\eta = (1+\eta)\,Q^\top x_0 - \eta w$. AL replaces the free
state by the doubly constrained reference state
$x_0^* = \arg\min\{G : P^\top x = v,\ Q^\top x = w\}$, whose output
constraint force defines the adjoint error $\mu_0$, and nudges by
$\eta\mu_0$ \cite{mcginnis2026coercivity}. In each case the physically
measurable update is
$\Delta \kappa_e \propto \eta^{-1}
[\partial_{\kappa_e} G(x_\eta) - \partial_{\kappa_e} G(x_{\mathrm{ref}})]$.
We work throughout in the idealized continuous-time, zero-nudge limit,
where the three flows take the common form
$\kdot_e = (d_e^\top y)(d_e^\top x_{\mathrm{ref}})$ with
\begin{equation}
\label{eq:flows}
  y_{\mathrm{EP}} = R\,Q r,
  \quad
  y_{\mathrm{CL}} = R\,Q D r,
  \quad
  y_{\mathrm{AL}} = R_{\mathrm{AL}}\, Q D_{\mathrm{AL}}\, \mu_0 .
\end{equation}
Here $R$ is the input-constrained resolvent of
Ref.~\cite{mcginnis2026coercivity},
$D = (Q^\top R\, Q)^{-1}$ is a discrete Dirichlet-to-Neumann operator, and
the AL quantities are evaluated at the reference state. EP is exact
gradient flow of $\Phi$. CL is not the gradient flow of any loss; its
error is premultiplied by the state-dependent operator $D$. AL is exact
gradient flow of the adjoint loss $\Phi^* = \tfrac12\|\mu_0\|^2$
\cite{mcginnis2026coercivity}. Following
Ref.~\cite{mcginnis2026conservation}, an edge whose conductance reaches
zero is removed and stays removed.

Because $M > O$ in all cases of interest, the solution set
$\mathcal{S} = \{\kappa : Q^\top x_0(\kappa) = w\}$ is generically a
manifold of dimension $M - O$. The convergence results of
Ref.~\cite{mcginnis2026coercivity} guarantee, under a coercivity
condition and from initializations close enough to $\mathcal{S}$, that
training reaches $\mathcal{S}$; the same reference exhibits spurious
fixed points off $\mathcal{S}$, so the guarantee is local. Our subject
is where on
$\mathcal{S}$ each flow lands, and what the landing point computes on
held-out inputs. For a linear circuit the trained behavior is summarized
by the input-output map $A_C(\kappa) \in \R^{O\times I}$ defined by
$Q^\top x_0 = A_C(\kappa)\, v$. The map is $0$-homogeneous in $\kappa$,
and crossbar circuits realize exactly the row-stochastic maps
\cite{mcginnis2026conservation}.

\section{Conservation structure and scale invariance}
\label{sec:theory}

Selection differences between EP and CL have gone unnoticed because in
the single-output setting common to most demonstrations there are none
to see. That the operator $D$ reduces to a positive scalar at $O=1$ is
noted in Ref.~\cite{mcginnis2026coercivity}, where it shows that EP and
CL lose coercivity under identical conditions, and in
Ref.~\cite{mcginnis2026conservation}, where it rules out limit cycles
for one-dimensional output. The same positivity determines the orbits,
and with them the selection.

\begin{remark}
\label{rem:equiv}
Let $O=1$, with inputs and outputs disjoint and $G(\cdot;\kappa)$ strictly
convex on the input-constraint surface. Then
$\kdot_{\mathrm{CL}} = D(\kappa)\,\kdot_{\mathrm{EP}}$ with
$D(\kappa) = (Q^\top R(\kappa)\, Q)^{-1}$ a positive scalar: $R$ is
positive semidefinite with $\ker R = \operatorname{range}(P)$, and
$Q \notin \operatorname{range}(P)$ because inputs and outputs are
disjoint, so $Q^\top R\, Q > 0$ \cite{mcginnis2026coercivity}. The two
fields therefore differ pointwise by a positive continuous scalar, a time
change maps one flow onto the other, and the two have identical orbits
and identical $\omega$-limit sets. Any selection difference between EP
and CL requires $O \ge 2$. Since Eq.~\eqref{eq:flows} is derived for
general separable energies in Ref.~\cite{mcginnis2026coercivity}, this
holds beyond linear circuits: single-output benchmarks, the configuration
of most published demonstrations, cannot reveal an inductive-bias
difference between EP and CL.
\end{remark}

The conservation law of Ref.~\cite{mcginnis2026conservation} states that
EP and CL conserve $K = \sum_i f_i(\kappa_i)$ whenever the energy
separates as $G = \sum_i g_i(\kappa_i)\,\widetilde G_i(x)$, with
$f_i' = g_i/g_i'$; for conductance training,
$K = \tfrac12\sum_e \kappa_e^2$. AL was introduced after that analysis and
its conservation behavior was not addressed. It has the matching
dissipation law, with the rate fixed by the rule's own objective.

\begin{proposition}[Dissipation law for AL]
\label{prop:dissipation}
Under the separability hypothesis above, the zero-nudge AL flow satisfies
\begin{equation}
  \dot K \;=\; -\lVert \mu_0 \rVert^2 \;=\; -\,2\,\Phi^{*} .
\end{equation}
Along the continuous-time flow $K$ therefore decreases strictly whenever
the objective is not yet met.
\end{proposition}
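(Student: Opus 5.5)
The proof will mirror the Euler-homogeneity argument behind the conservation law of Ref.~\cite{mcginnis2026conservation}, applied to the AL flow and its reference state. For separable energies $G=\sum_i g_i(\kappa_i)\widetilde G_i(x)$ with $f_i'=g_i/g_i'$, I differentiate $K=\sum_i f_i(\kappa_i)$ along the flow, $\dot K=\sum_i f_i'(\kappa_i)\dot\kappa_i$. In the zero-nudge limit the AL update is $\dot\kappa_i=-\partial_\eta\big|_{\eta=0}\partial_{\kappa_i}G(x_\eta;\kappa)$, where $x_\eta$ is the state nudged by $\eta\mu_0$ from the reference $x_0^*$. The overall sign convention is whatever makes AL a descent flow of $\Phi^*$. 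Since $\partial_{\kappa_i}G=g_i'\widetilde G_i$, the weighted sum collapses:
\[
\sum_i f_i'\,\partial_{\kappa_i}G=\sum_i g_i\widetilde G_i=G .
\]
Hence $\dot K=-\partial_\eta\big|_0 G(x_\eta;\kappa)$, which is the first-order change in energy of the nudged state. For conductances, $g_e=\kappa_e$ and $f_e=\kappa_e^2/2$, and the identity is just $\sum_e\kappa_e\,\partial_{\kappa_e}G=G$ from 1-homogeneity.

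The second step evaluates $\partial_\eta G(x_\eta)$ at $\eta=0$ using the stationarity conditions of the reference state. By the chain rule this equals $\langle\nabla_x G(x_0^*),\partial_\eta x_\eta\rangle$. The KKT conditions of $x_0^*=\arg\min\{G:P^\top x=v,\ Q^\top x=w\}$ give $\nabla_x G(x_0^*)=P\lambda+Q\mu_0$, where $\mu_0$ is the output constraint force defining the adjoint error. The nudged state keeps $P^\top x_\eta=v$, so the $P\lambda$ term drops out, and $\dot K$ reduces (up to sign) to $\langle\mu_0,Q^\top\partial_\eta x_\eta\rangle$.

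It remains to identify $Q^\top\partial_\eta x_\eta$. The cleanest route uses the common form of Eq.~\eqref{eq:flows}, namely $\partial_\eta x_\eta=y_{\mathrm{AL}}=R_{\mathrm{AL}}QD_{\mathrm{AL}}\mu_0$. This gives $Q^\top y_{\mathrm{AL}}=(Q^\top R_{\mathrm{AL}}Q)D_{\mathrm{AL}}\mu_0$. If $D_{\mathrm{AL}}=(Q^\top R_{\mathrm{AL}}Q)^{-1}$, this equals $\mu_0$, and then $\dot K=-\|\mu_0\|^2=-2\Phi^*$. Strictness follows because the right side vanishes only when $\mu_0=0$, which is exactly when the objective is met.

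\textbf{Main obstacle.} The delicate point is this last identification: that the AL nudge displaces the outputs by exactly $\eta\mu_0$ to first order, i.e.\ that $D_{\mathrm{AL}}$ inverts the output block of $R_{\mathrm{AL}}$. This is the analogue of CL's clamp $Q^\top x_\eta-Q^\top x_0=\eta r$. I would settle it by checking the definition of the AL nudge in Ref.~\cite{mcginnis2026coercivity} as a clamp of the outputs by $\eta\mu_0$ relative to $x_0^*$, rather than by inferring it from the operator names. I would also track the sign convention so that the result is dissipation and not growth. As a consistency check, running the same computation for EP gives $\langle\nabla_xG(x_0),\cdot\rangle$ with $Q$-component zero at the free state, hence $\dot K=0$, which recovers the conservation law.
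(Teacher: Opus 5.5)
Your proof is correct and is essentially the paper's. The separability identity turns $\sum_i f_i'(\kappa_i)\dot\kappa_i$ into the pairing of $y$ with $\nabla_x G(x_0^*)$; you collapse $G$ before differentiating in $\eta$, while the paper collapses $\partial_{\kappa_i}\nabla_x G$ directly, which is the same step by the chain rule. The reference-state stationarity conditions, together with $P^\top y=0$ and $Q^\top y=\mu_0$, then finish the argument.

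The identity $Q^\top y=\mu_0$ that you single out as the main obstacle is exactly what the paper imports from Ref.~\cite{mcginnis2026coercivity} rather than proving it. Your sign conventions are opposite to the paper's in both the update and the multiplier $\mu_0$, but the two flips cancel, so you are describing the same descent flow.
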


\begin{proof}
Writing the AL flow as
$\dot\kappa_i = \langle y, \partial_{\kappa_i}\nabla_x G(x_0^*)\rangle$,
\begin{align*}
  \dot K = \sum_i f_i'(\kappa_i)\,\dot\kappa_i
  &= \Big\langle y, \sum_i f_i'(\kappa_i)\,
      \partial_{\kappa_i}\nabla_x G(x_0^*) \Big\rangle \\
  &= \langle y, \nabla_x G(x_0^*) \rangle,
\end{align*}
using the separability identity of Ref.~\cite{mcginnis2026conservation}
at the reference state. The reference-state stationarity conditions give
$\nabla_x G(x_0^*) = -P\lambda_0 - Q\mu_0$, and the linearized system
defining $y$ imposes $P^\top y = 0$ and $Q^\top y = \mu_0$
\cite{mcginnis2026coercivity}. Hence
$\dot K = -\langle Q^\top y, \mu_0 \rangle = -\lVert\mu_0\rVert^2$.
\end{proof}

\begin{remark}
Proposition~\ref{prop:dissipation} is the zero-nudge counterpart of the
finite-nudge drift $\dot K = -\eta\, G(y;\kappa)$ that
Ref.~\cite{mcginnis2026conservation} identifies for one-sided nudging. AL
therefore carries, even at zero nudge, the parameter-scale drift that
Ref.~\cite{mcginnis2026conservation} attributes to non-conserving
nudging. In hardware whose conductances have
bounded operating ranges this is a design consideration in itself. Below
it serves two further purposes: it makes AL the probe that separates the
circuit's invariance from the rule's conservation
(Sec.~\ref{sec:memory}), and it governs the quality of the solution AL
selects (Sec.~\ref{sec:generalization}). We use the exact rate
$-2\Phi^*$ as the step-control law for our AL integrator
(Appendix~\ref{app:numerics}).
\end{remark}

A conserved quantity restricts each trajectory to a sphere
$\|\kappa\| = \mathrm{const}$, and one might expect the initialization
norm to index the learned function. In linear circuits it cannot.

\begin{proposition}[Initialization scale is inert in linear circuits]
\label{prop:homogeneity}
In a linear resistive circuit all of whose elements are trainable
conductances, the vector fields of EP, CL, and AL are homogeneous in
$\kappa$ of degrees $-1$, $0$, and $+1$:
\begin{equation}
\begin{aligned}
  F_{\mathrm{EP}}(c\kappa) &= c^{-1} F_{\mathrm{EP}}(\kappa), \\
  F_{\mathrm{CL}}(c\kappa) &= F_{\mathrm{CL}}(\kappa), \\
  F_{\mathrm{AL}}(c\kappa) &= c\, F_{\mathrm{AL}}(\kappa),
\end{aligned}
\qquad c > 0 .
\end{equation}
For each rule the map $\kappa \mapsto c\kappa$ sends learning orbits to
learning orbits, so the $\omega$-limit from $c\kappa_0$ is $c$ times the
$\omega$-limit from $\kappa_0$, and since $A_C(c\kappa) = A_C(\kappa)$,
the selected input-output map is the same from both initializations. The
conserved mass is functionally inert in linear circuits, and
initialization affects selection only through its direction.
\end{proposition}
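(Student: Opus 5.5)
The plan is to track how each ingredient of Eq.~\eqref{eq:flows} scales under $\kappa \mapsto c\kappa$. Then orbit equivalence follows from a standard time-reparametrization argument for homogeneous vector fields.

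First, the states. Since $L(c\kappa) = c\,L(\kappa)$, the free state $x_0 = \arg\min\{G : P^\top x = v\}$ is unchanged, because the constraint set is fixed and the objective is only multiplied by $c$. The same holds for the doubly constrained reference $x_0^*$. Hence $r$ and $A_C$ are invariant, which is the $0$-homogeneity already noted in Sec.~\ref{sec:setup}.

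The multipliers scale with the energy. The stationarity conditions $L x_0^* + P\lambda_0 + Q\mu_0 = 0$ give $\mu_0(c\kappa) = c\,\mu_0(\kappa)$. The input-constrained resolvent $R$ is the inverse of $L$ restricted to the subspace $P^\top y = 0$, so $R(c\kappa) = c^{-1}R(\kappa)$. It follows that $D = (Q^\top R Q)^{-1}$ scales as $c$. The AL quantities $R_{\mathrm{AL}}$ and $D_{\mathrm{AL}}$ scale the same way, being the analogous objects evaluated at the reference state.

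Collecting these:
\begin{align*}
  y_{\mathrm{EP}} &\to c^{-1} y_{\mathrm{EP}}, \\
  y_{\mathrm{CL}} &\to c^{-1}\cdot c\, y_{\mathrm{CL}} = y_{\mathrm{CL}}, \\
  y_{\mathrm{AL}} &\to c^{-1}\cdot c\cdot c\, y_{\mathrm{AL}} = c\, y_{\mathrm{AL}}.
\end{align*}
The field is $\kdot_e = (d_e^\top y)(d_e^\top x_{\mathrm{ref}})$, and $x_{\mathrm{ref}}$ is invariant, so the field inherits the scaling of $y$. This gives degrees $-1$, $0$ and $+1$. As a consistency check for AL, the characterization used in the proof of Proposition~\ref{prop:dissipation} ($P^\top y = 0$, $Q^\top y = \mu_0$) together with $\mu_0 \to c\mu_0$ already forces $y_{\mathrm{AL}} \to c\,y_{\mathrm{AL}}$, provided $y$ is the unique solution of that linear system. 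I would use this as the primary argument for AL, since it avoids having to identify $R_{\mathrm{AL}}$ and $D_{\mathrm{AL}}$ precisely.

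Next, orbits. If $\kappa(t)$ solves $\kdot = F(\kappa)$ with $F(c\kappa) = c^{p} F(\kappa)$, then
\[
  \tilde\kappa(s) = c\,\kappa(c^{p-1}s)
\]
satisfies $\tilde\kappa'(s) = c^{p}F(\kappa) = F(\tilde\kappa)$, so it is a solution starting at $c\kappa_0$. Because $c^{p-1} > 0$, the time change is a monotone bijection of $[0,\infty)$, and therefore $\omega(c\kappa_0) = c\,\omega(\kappa_0)$. The edge-removal convention must be checked for compatibility: $\kappa_e = 0$ iff $c\kappa_e = 0$, and the reduced circuit is again all-trainable and linear, so the same scaling applies piecewise between removal events. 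The selected map then follows from $A_C(c\kappa) = A_C(\kappa)$, extended by continuity of $A_C$ to the limit set on the positive cone (or on the reduced graph).

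I expect the main obstacle to be the scaling of the resolvent and multipliers. The hypothesis that every element is trainable is exactly what makes $L$ linear in $\kappa$ with no fixed summand; a fixed conductance or rectifier would add a $c$-independent term and break $R(c\kappa) = c^{-1}R(\kappa)$. I would make this explicit by writing $R$ as the constrained inverse through the KKT block matrix, whose solution scales cleanly only when the whole Laplacian scales.
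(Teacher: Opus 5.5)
Your proposal is correct and follows essentially the same route as the paper. The paper also fixes the degree of $y$ for each rule (for AL exactly as you do, from $P^\top y = 0$, $Q^\top y = \mu_0$ and the $1$-homogeneity of $\mu_0$), lets the field inherit it through the invariant reference state, and concludes via time reparametrization, edge removal commuting with scaling, and the $0$-homogeneity of $A_C$; your explicit scaling of $R$ and $D$ for EP and CL and your explicit time change $\tilde\kappa(s) = c\,\kappa(c^{p-1}s)$ are cosmetic refinements that make the paper's linear-system and direction-field arguments concrete.
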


\begin{proof}
The free state $x_0$ is $0$-homogeneous in $\kappa$: the voltage
constraints do not involve $\kappa$, and rescaling $G$ by $c$ leaves its
constrained minimizer unchanged. The input-constraint multipliers are
$1$-homogeneous. For EP, $y$ solves $L y + P\mu = Qr$ with $P^\top y = 0$
and $r$ $0$-homogeneous, so $y \to c^{-1} y$ under the scaling. For CL,
$y$ is pinned by the $\kappa$-independent constraints $P^\top y = 0$ and
$Q^\top y = r$, so $y$ is $0$-homogeneous. For AL, $\mu_0$ is
$1$-homogeneous and $Q^\top y = \mu_0$ forces $y \to c\,y$. In each case
$\kdot_e = (d_e^\top y)(d_e^\top x_{\mathrm{ref}})$ inherits the degree of
$y$. A homogeneous vector field of any degree has a scale-invariant
direction field, $F(c\kappa)/\|F(c\kappa)\| = F(\kappa)/\|F(\kappa)\|$,
so scaling maps orbits to time-reparametrized orbits. Edge removal
commutes with the scaling, since zero crossings scale with $\kappa$. The
$0$-homogeneity of $A_C$ completes the argument.
\end{proof}

In linear circuits the conserved norm is therefore remembered but inert.
The proposition also identifies where scale can act: any fixed, untrained
circuit element breaks the joint homogeneity and, as noted in
Ref.~\cite{mcginnis2026conservation}, the conservation law itself.
Section~\ref{sec:memory} tests exactly this.

\section{Solution selection in linear circuits}
\label{sec:linear}

The propositions constrain selection but do not quantify it. We measure
it first where they are exact. Circuits are
connected random graphs with $N=10$ to $12$ nodes, $M=19$ to $27$
trainable edges, three input nodes including ground, and $O=2$ to $4$
outputs. Targets are constructed from a reference conductance vector, so
the solution manifold is nonempty and of dimension $M-O$. Generation
details and seeds are in the repository (Appendix~\ref{app:numerics}).
Each run integrates the three flows from a common initialization to their
$\omega$-limits and compares the selected maps $A_C(\kappa_\infty)$ by the
relative Frobenius distance $\delta(A,B)=\|A-B\|/\|A\|$. As a numerical
floor we re-integrate EP with a different initial step size; the floor is
$\delta \approx 3\times10^{-9}$ in the median.

\begin{table}[t]
\caption{\label{tab:linear}%
Selection in linear circuits at $O=2$ (36 convergent runs over three
topologies, twelve initializations each). Distances are relative Frobenius
distances between selected input-output maps from a common
initialization. The direction spread is the median distance to the mean
map over 40 random initialization directions on a fixed task (one
topology), shown for EP; CL and AL give 0.164 and 0.183.}
\begin{ruledtabular}
\begin{tabular}{lccc}
 & median & min & max \\
\hline
numerical floor                  & $3.1\times10^{-9}$ &      & $6.8\times10^{-9}$ \\
$\delta(A_{\mathrm{EP}},A_{\mathrm{CL}})$ & $1.8\times10^{-5}$ & $1.4\times10^{-9}$ & $5.8\times10^{-4}$ \\
$\delta(A_{\mathrm{EP}},A_{\mathrm{AL}})$ & $1.2\times10^{-3}$ & $8.1\times10^{-6}$ & $1.3\times10^{-2}$ \\
$\delta(A_{\mathrm{CL}},A_{\mathrm{AL}})$ & $1.3\times10^{-3}$ & $8.2\times10^{-6}$ & $1.3\times10^{-2}$ \\
\hline
direction spread (EP) & $1.6\times10^{-1}$ & \multicolumn{2}{c}{} \\
\end{tabular}
\end{ruledtabular}
\end{table}

The selection differences predicted by Eq.~\eqref{eq:flows} are real
(Table~\ref{tab:linear}). With two outputs, EP and CL land on measurably
different points of the solution manifold, with a median $\delta$ of
$1.8\times10^{-5}$, four orders of magnitude above the floor, and AL
lands roughly seventy times farther from either. Consistent with the
mechanism, which is $D$ deviating from a scalar, the EP-CL gap correlates
with the strength of the effective coupling between output nodes
(Spearman rank correlation $0.41$, $n=36$ pooled over $O=2,3,4$), though
it does not grow monotonically with the number of outputs. Real, however,
is not the same as large. Over random initialization directions at fixed
task, the selected maps spread by $\delta \approx 0.16$, which exceeds
every rule effect by two to four orders of magnitude: in linear circuits
the inductive bias of physical learning is set almost entirely by where
training starts, and the rule contributes a small systematic correction.
Nor does the correction matter for generalization here. Training on a
single input-output pair against a ground-truth circuit and evaluating
the full map, the three rules generalize identically within statistical
resolution, with median relative errors $0.0879$, $0.0879$, and $0.0881$
over 30 runs; a three-pair training set determines the $2\times3$ map
completely and all rules recover it, which confirms identifiability.
Linear circuits therefore provide a clean null: rule choice moves the
selected solution but not its quality.

\section{Initialization memory from untrainable elements}
\label{sec:memory}

Proposition~\ref{prop:homogeneity} localizes where initialization scale
can matter, and the condition it needs is that every element be
trainable. An element the rule does not adjust is not rescaled with the
others, so the joint homogeneity fails whatever that element's
constitutive law. Nonlinearity is therefore not required, and separating
the two demands three arms rather than the usual two. On each of twenty
random topologies at $N=8$ we designate four edges and run the same task
three ways: with those edges fixed rectifiers, ideal diodes with reverse
leakage $\epsilon=10^{-4}$; with the same edges fixed \emph{linear}
resistors of the same conductance, untrained but not nonlinear; and with
all sixteen edges linear and trainable. The three arms share the graph,
the ports, the designated subset, and the initial conductances of every
edge either arm trains. Laboratory contrastive networks contain both
kinds of fixed component \cite{dillavou2024machine}, so the comparison is
the physically relevant one.

With rectifiers the co-content remains convex and piecewise quadratic,
and the zero-nudge flows of Eq.~\eqref{eq:flows} carry over with the
Hessian of $G$ at the reference state in place of $L$
(Appendix~\ref{app:numerics}). The trained behavior is no longer a
matrix, so we compare functions by their outputs on a fixed $5\times5$
grid of test inputs spanning both polarities, which switches the
rectifiers.

\begin{table*}[t]
\caption{\label{tab:memory}%
Initialization-norm memory over twenty random $N=8$ topologies, sixteen
edges each, three inputs including ground and two outputs. Arm A makes
four designated edges fixed rectifiers, arm B makes the same four edges
fixed linear resistors of the same conductance, and arm C makes all
sixteen trainable. Entries are the relative distance on the $5\times5$
grid between the function learned from $c\,\kappa_0$ and from $\kappa_0$
under EP, as ensemble median with interquartile range and full range. CL
agrees with EP to two digits throughout. Cells rest on 18 to 20
topologies; 32 of 900 runs did not reach training loss $10^{-11}$ within
the integration budget and are excluded. The lower block gives the
relative change in $K$ over the same runs.}
\begin{ruledtabular}
\footnotesize
\begin{tabular}{lcccc}
 & $c=1/4$ & $c=1/2$ & $c=2$ & $c=4$ \\
\hline
\multicolumn{5}{l}{\textit{Memory}, relative distance from the $c=1$ function}\\
A, fixed rectifiers & $2.1\times10^{-1}$ & $1.0\times10^{-1}$ & $8.9\times10^{-2}$ & $1.5\times10^{-1}$ \\
\quad interquartile & $1.5$ to $2.4\times10^{-1}$ & $6.6$ to $12\times10^{-2}$ & $6.4$ to $11\times10^{-2}$ & $1.1$ to $2.4\times10^{-1}$ \\
\quad range & $7.6\times10^{-2}$ to $4.8\times10^{-1}$ & $4.0$ to $16\times10^{-2}$ & $2.8$ to $17\times10^{-2}$ & $4.9\times10^{-2}$ to $3.4\times10^{-1}$ \\
B, fixed linear resistors & $1.2\times10^{-1}$ & $6.5\times10^{-2}$ & $6.4\times10^{-2}$ & $9.6\times10^{-2}$ \\
\quad interquartile & $3.7$ to $16\times10^{-2}$ & $2.2$ to $9.9\times10^{-2}$ & $3.0$ to $9.9\times10^{-2}$ & $6.6$ to $18\times10^{-2}$ \\
\quad range & $6.5\times10^{-6}$ to $2.9\times10^{-1}$ & $6.6\times10^{-6}$ to $1.7\times10^{-1}$ & $7.3\times10^{-7}$ to $1.8\times10^{-1}$ & $8.5\times10^{-7}$ to $4.1\times10^{-1}$ \\
C, all edges trainable & $2.3\times10^{-7}$ & $3.5\times10^{-8}$ & $1.7\times10^{-8}$ & $2.1\times10^{-8}$ \\
\quad range & $0.7$ to $17\times10^{-7}$ & $0.03$ to $3.5\times10^{-7}$ & $0.008$ to $0.8\times10^{-7}$ & $0.02$ to $0.9\times10^{-7}$ \\
\hline
paired ratio B/A, median & $0.59$ & $0.70$ & $0.96$ & $1.09$ \\
topologies with A above B & $15/18$ & $12/18$ & $10/18$ & $8/18$ \\
\hline
\multicolumn{5}{l}{\textit{Conservation drift} $|\Delta K|/K_0$, median (maximum)}\\
A, fixed rectifiers & $8.2\times10^{-2}$ ($6.6$) & $3.7\times10^{-2}$ ($1.5$) & $7.3\times10^{-3}$ ($2.6\times10^{-1}$) & $5.3\times10^{-3}$ ($4.4\times10^{-1}$) \\
B, fixed linear resistors & $6.1\times10^{-1}$ ($6.6$) & $1.0\times10^{-1}$ ($1.5$) & $3.7\times10^{-2}$ ($2.7\times10^{-1}$) & $4.3\times10^{-2}$ ($3.6\times10^{-1}$) \\
C, all edges trainable & $9.1\times10^{-9}$ ($4.8\times10^{-8}$) & $8.3\times10^{-9}$ ($4.7\times10^{-8}$) & $9.3\times10^{-13}$ ($6.0\times10^{-11}$) & $5.3\times10^{-15}$ ($2.7\times10^{-13}$) \\
\end{tabular}
\end{ruledtabular}
\end{table*}

Rescaling the initialization, with the same direction, the same task, and
convergence to the same training loss, changes the learned function by a
median of twelve percent under fixed rectifiers and eight percent under
fixed linear resistors, against $3\times10^{-8}$ when every element is
trainable (Table~\ref{tab:memory}). Both arms sit five to seven orders of
magnitude above the control, and the paired ratio between them has median
$0.75$ across the ensemble. Fixedness produces the memory; nonlinearity
is not required.

What nonlinearity supplies is concentration (Fig.~\ref{fig:dose}).
Sweeping the number of fixed edges from one to eight leaves the rectified
arm flat, median memory $0.146$, $0.138$, $0.132$, $0.153$, so a single
non-adjustable rectifier already produces the whole effect. At one fixed
edge the linear arm is the smaller of the two, median $0.043$ with a
paired ratio of $0.62$, but it exceeds the rectified arm in six of
eighteen topologies and its dependence on the number of fixed edges is
not monotonic, so we read no dose relation into it. At four fixed edges
the rectified arm is also the steadier: its memory never falls below
$2.8\times10^{-2}$, while the linear arm drops to $10^{-6}$ in one
topology of twenty; at one or two fixed edges single topologies of the
dose sweep fall to $2\times10^{-3}$ and $10^{-6}$. For hardware
the operative statement is that one non-adjustable component suffices.

\begin{figure}[t]
\includegraphics[width=\linewidth]{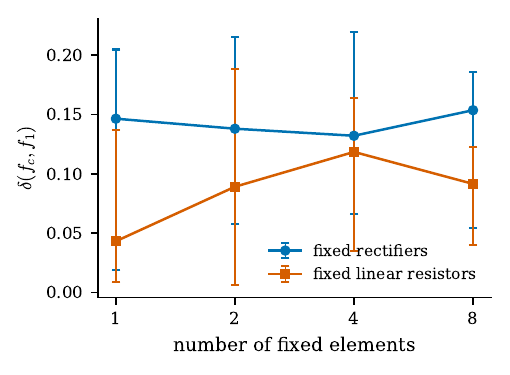}
\caption{\label{fig:dose}%
How much fixedness is needed. Median relative distance
$\delta(f_c,f_1)$ over ten topologies and the scales $c\in\{1/4,4\}$,
against the number of edges held fixed, with interquartile bars. The
rectified arm is flat: one fixed rectifier gives the same memory as
eight. The linear arm is smaller at a single fixed edge and does not
depend monotonically on the count.}
\end{figure}

The same term that breaks the homogeneity also breaks the conservation of
$K$, and the two effects are separable in the data. Drift with untrained
elements present is far larger than the linear case admits: median
$1.4\times10^{-2}$ under rectifiers and $5.3\times10^{-2}$ under fixed
linear resistors, with eight runs in each arm above unity. Yet the
memory does not follow it. In the sixteen runs whose drift stays below
$10^{-4}$ the memory is still $3.4$ to $26$ percent, median $11.4$
percent, and the fixed-linear arm drifts more than the rectified arm
while remembering less. Stratifying arm A at $|\Delta K|/K_0 = 10^{-3}$
moves the median memory from $10.6$ percent in the nineteen conserving
runs to $13.4$ percent in the fifty-four drifting ones, and pooling the
two fixed arms the medians are $10.6$ and $10.0$ percent. The
initialization scale is imprinted where the conservation law holds, not
only where it fails.

Two dissociations settle the point. Sweeping the number of fixed edges
from one to eight raises the median drift monotonically from
$7.2\times10^{-5}$ to $3.9\times10^{-1}$, a factor of five thousand,
while the median memory stays at $0.146$, $0.138$, $0.132$ and $0.153$:
four decades of change in how badly the conservation law fails, and no
movement in what the circuit remembers. In the all-trainable arm the
comparison runs the other way. There AL drifts by a median
$5.1\times10^{-3}$, a third of the rectified arm's $1.4\times10^{-2}$,
and its memory is exactly zero to machine precision in every run.
Substantial drift with no memory, and four decades of drift with constant
memory, on the same graphs in the same experiment.

In deep networks the initialization scale also selects the solution, but
through the trajectory the optimizer happens to take
\cite{chizat2019lazy,woodworth2020kernel}, and discrete steps, weight
decay and normalization erode that selection. Here the selection is a
property of the circuit rather than of the integrator. Ref.~\cite{stern2024training}
observes that a circuit's outputs depend on conductance ratios and not on
overall scale, which is Proposition~\ref{prop:homogeneity} read at fixed
$\kappa$; the memory measured here concerns which ratios training
reaches from a given starting scale, and the two statements are
compatible.

The dissociations vary how badly the conservation law fails; AL
discards it altogether. By Proposition~\ref{prop:dissipation} it burns
the mass at twice its own loss, so trajectories launched
from different initial spheres are drawn together as they descend. If the
conservation law were what fixed the learned function, AL would forget
what EP and CL are bound to keep. It does not. Over the ensemble the
median ratio of AL's memory to EP's is $1.00$, with AL smaller in seven
to nine of the seventeen or eighteen paired topologies at each scale and
larger in the rest. A
rule that destroys the invariant remembers its initialization exactly as
well as the rules that preserve it, which leaves the circuit's broken
homogeneity as the only mechanism the data support.

\begin{figure}[t]
\includegraphics[width=\linewidth]{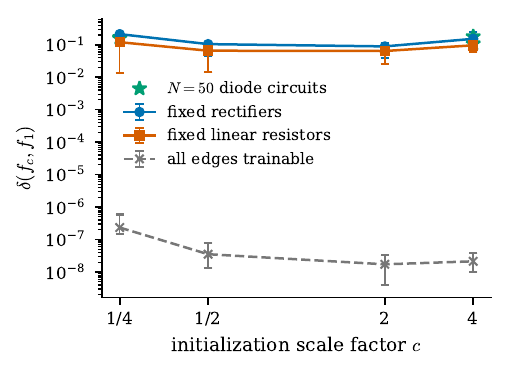}
\caption{\label{fig:memory}%
Initialization-norm memory across the three arms of
Table~\ref{tab:memory}: ensemble medians over twenty $N=8$ topologies
with interquartile bars, under EP. Fixed elements of either kind separate
the learned functions from the all-trainable case by five to seven orders
of magnitude, and the two fixed arms lie within a factor of two of each
other. Stars are two $N=50$ diode circuits on the same protocol, showing
that the effect is not an artifact of small circuits.}
\end{figure}

The memory effect is not an artifact of small circuits
(Fig.~\ref{fig:memory}). At $N=50$ nodes,
with 75 trainable edges and 25 rectifiers, the same protocol gives
function changes of 11 to 18 percent at $c=1/4$ and $c=4$. The all-linear
controls converge to zero as the integration budget grows, and their
finite-time deviations follow the degree ladder of
Proposition~\ref{prop:homogeneity}: the rule whose field slows under the
scaling is the one that lags, in the direction the degree predicts.

\section{Generalization at matched training loss}
\label{sec:generalization}

Do the rules differ in the quality of the solutions they select?
Comparisons of learned functions at the end of training are confounded by
convergence depth: the rules reach different training losses in equal
time, and test-error differences then reflect optimization speed rather
than selection. The confound is large enough to invert conclusions. At a
fixed integration budget AL, which trains fastest, appears to generalize
better than EP on the tasks below; at matched training loss the ordering
reverses. We therefore integrate each rule until its training loss first
crosses fixed checkpoints, $3\times10^{-6}$ and $3\times10^{-7}$, and
evaluate the learned functions at the crossings. Tasks are three-pair
training sets from ground-truth diode circuits, with six topologies and
six initializations each, giving 36 paired runs. Every run reached both
checkpoints.

\begin{table*}[t]
\caption{\label{tab:matched}%
Generalization at matched training loss in nonlinear circuits ($n=36$
paired runs over six topologies). Test error is the relative distance to
the ground-truth function on the input grid, reported as the pooled
median. ``AL worse'' counts paired runs in which AL's test error exceeds
EP's, and the penalty is the median of the paired relative difference
$(\mathrm{AL}-\mathrm{EP})/\mathrm{EP}$. The residual floor $\sqrt{2L}$
bounds the function resolution at checkpoint $L$.}
\begin{ruledtabular}
\begin{tabular}{lcccccc}
checkpoint & EP & CL & AL & AL worse & penalty & floor \\
\hline
$L=3\times10^{-6}$ & $0.0343$ & $0.0347$ & $0.0347$ & $29/36$ & $+6.7\%$ & $2.4\times10^{-3}$ \\
$L=3\times10^{-7}$ & $0.0215$ & $0.0217$ & $0.0233$ & $26/36$ & $+3.5\%$ & $7.7\times10^{-4}$ \\
\end{tabular}
\end{ruledtabular}
\end{table*}

At matched loss the dissipative rule loses on average
(Table~\ref{tab:matched}). AL's test error exceeds EP's in 29 of 36
paired runs at the first checkpoint (two-sided sign test,
$p = 3\times10^{-4}$) and 26 of 36 at the second ($p = 0.011$), with
median paired penalties of $6.7$ and $3.5$ percent. The effect is
topology-dependent: four of the six topologies show AL worse in
essentially every run, one shows no consistent ordering, and one shows
the reverse. The functions AL selects differ from EP's by
$\delta \approx 3$ to $6\times10^{-3}$, above the residual floors, so the
differences are selection rather than incomplete convergence. EP and CL
remain statistically indistinguishable: CL's error exceeds EP's in 21 and
18 of 36 runs, consistent with a fair coin, and their function difference
sits at the floor. AL meanwhile reaches any given training loss fastest
of the three. The faster training is a direct consequence of
the dissipation, since the same rate that burns the mass is the rate at
which AL descends its own objective. Whether the matched-loss penalty
shares that cause is tested below. The initialization
memory does not: it is the circuit's, and Sec.~\ref{sec:memory} shows
AL carries as much of it as EP does.

\begin{figure}[t]
\includegraphics[width=\linewidth]{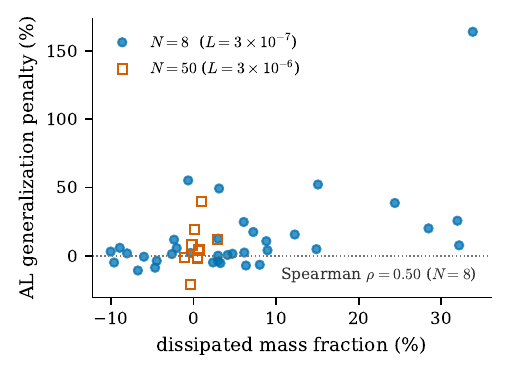}
\caption{\label{fig:dissipation}%
AL's generalization penalty against the fraction of conductance mass it
dissipated before reaching the training-loss checkpoint. Filled circles:
$N=8$, all 36 paired runs at $L=3\times10^{-7}$. Open squares: an
earlier, independent set of eight $N=50$ runs at $L=3\times10^{-6}$,
which reach the checkpoint after an order of magnitude less dissipation;
its median penalty of $+6$ percent is not resolved at $n=8$ and is not
reproduced by the fifteen-run set reported in the text. Thirteen of
the 36 runs sit at negative dissipated fraction: with fixed rectifiers
present the leak term of Sec.~\ref{sec:memory} adds to
$\dot K = -2\Phi^*$ and can outweigh it, so
Proposition~\ref{prop:dissipation}'s strict decrease is a statement about
the all-trainable flow.}
\end{figure}

Whether the dissipation causes the penalty is not established by these
data, and we report the tests rather than the reading
(Fig.~\ref{fig:dissipation}). Across the 36 runs at $N=8$ the penalty
co-varies with the fraction of mass AL dissipated before reaching the
deeper checkpoint, Spearman $0.50$, but not at the shallower one, where
the coefficient is $0.17$. The 36 runs are six topologies with six
initializations each and the runs within a topology move together, so
the level of independent replication is six: over the topology medians
the coefficient is $0.60$ with $n=6$. In the $N=50$ set the relation is
absent, Spearman $-0.011$ with $p=0.97$ over 15 runs. One supportive
correlation, one null and one absent relation do not identify a
mechanism, and the pattern is what a topology-level confound would also
produce: the reversed topology sits at low dissipation, as dissipation
would require, but one high-penalty topology dissipates little.

The penalty itself is likewise bounded to the small circuits. At $N=50$
the three rules are indistinguishable, with median test errors $0.0208$,
$0.0209$ and $0.0210$, AL worse in 6 of 15 paired runs and a median
paired penalty of $-3.3$ percent, so the point estimate crosses zero
rather than shrinking toward it. The checkpoint there is reached after
about a quarter of the dissipation of the $N=8$ runs, median fraction
$0.8$ against $3.3$ percent. We therefore record the matched-loss
penalty as a small-circuit observation that we cannot reproduce at
larger size, and claim no mechanism for it.

\section{Discussion}
\label{sec:discussion}

In Ref.~\cite{mcginnis2026conservation} the conserved mass plays a
stability role: it explains why training does not drift out of a device's
operating range, and symmetric clamping is prescribed to protect it. That
analysis concerns the scale of the parameters during training; ours
concerns the function the parameters select, and it separates two things
the conductances make easy to conflate. The circuit decides whether the
initialization scale is remembered at all. Any network holding a
component the rule does not adjust carries a functionally significant
imprint of the scale it was built at, one such component is enough, and
no choice of rule removes it: that burden belongs to the hardware, where
it is a calibration cost if uncontrolled and a design variable if
exploited. What the rule decides is narrower and less certain. It sets
the speed of training, where the dissipative rule is fastest by an order
of magnitude, and in the small circuits studied here it is associated
with a few percent of test error at matched training loss, an effect we
could not reproduce at fifty nodes. A designer therefore holds two levers
that act on different things, one of them established and one of them
provisional, and neither substitutes for the other. On the second lever
single-output benchmarks are silent, since they cannot distinguish EP
from CL at all.

Four limitations bound our claims. All experiments are on simulated
circuits of at most fifty nodes; the propositions hold at any size, and
the memory effect persists at the largest size we test, but hardware-scale
magnitudes are open. With untrained elements present the conservation of
$K$ is itself approximate, at the percent level in the median and worse
in a minority of runs, so the exactness of Proposition~\ref{prop:homogeneity}
is a statement about the all-trainable limit and not about the circuits
where the memory appears. The matched-loss penalty rests on six
topologies, is not universal among them, does not keep its direction
between training-loss checkpoints, and does not reproduce at fifty nodes;
we report it and do not build on it. No mechanism for it is established:
the dissipation-penalty association appears at one of three tests, and a
topology-level confound would produce the same pattern. A selection
theory for circuits with fixed components, where the interesting behavior
lives, is the natural next step.

In machine learning, conservation laws of training dynamics have been
catalogued but are broken by practical optimizers, and the initialization
scale acts through the trajectory rather than through an invariant
\cite{kunin2021neural,marcotte2023abide,chizat2019lazy,woodworth2020kernel}.
Physical learning realizes the opposite design point exactly, and it is
that exactness which lets the two mechanisms be told apart here rather
than only inferred. Conservation is native to the rule, scale invariance
is native to the circuit, and the memory of how a device was built
belongs to the second.

\section*{Data and code availability}

All results are reproducible from the repository at
\url{https://github.com/dangoldbj/physical-learning-memory}, which
contains the simulation package, the validation suite, and one script per
experiment with pinned seeds.

\bibliography{refs}

\begin{thebibliography}{15}%
\makeatletter
\providecommand \@ifxundefined [1]{%
 \@ifx{#1\undefined}
}%
\providecommand \@ifnum [1]{%
 \ifnum #1\expandafter \@firstoftwo
 \else \expandafter \@secondoftwo
 \fi
}%
\providecommand \@ifx [1]{%
 \ifx #1\expandafter \@firstoftwo
 \else \expandafter \@secondoftwo
 \fi
}%
\providecommand \natexlab [1]{#1}%
\providecommand \enquote  [1]{``#1''}%
\providecommand \bibnamefont  [1]{#1}%
\providecommand \bibfnamefont [1]{#1}%
\providecommand \citenamefont [1]{#1}%
\providecommand \href@noop [0]{\@secondoftwo}%
\providecommand \href [0]{\begingroup \@sanitize@url \@href}%
\providecommand \@href[1]{\@@startlink{#1}\@@href}%
\providecommand \@@href[1]{\endgroup#1\@@endlink}%
\providecommand \@sanitize@url [0]{\catcode `\\12\catcode `\$12\catcode
  `\&12\catcode `\#12\catcode `\^12\catcode `\_12\catcode `\%12\relax}%
\providecommand \@@startlink[1]{}%
\providecommand \@@endlink[0]{}%
\providecommand \url  [0]{\begingroup\@sanitize@url \@url }%
\providecommand \@url [1]{\endgroup\@href {#1}{\urlprefix }}%
\providecommand \urlprefix  [0]{URL }%
\providecommand \Eprint [0]{\href }%
\providecommand \doibase [0]{https://doi.org/}%
\providecommand \selectlanguage [0]{\@gobble}%
\providecommand \bibinfo  [0]{\@secondoftwo}%
\providecommand \bibfield  [0]{\@secondoftwo}%
\providecommand \translation [1]{[#1]}%
\providecommand \BibitemOpen [0]{}%
\providecommand \bibitemStop [0]{}%
\providecommand \bibitemNoStop [0]{.\EOS\space}%
\providecommand \EOS [0]{\spacefactor3000\relax}%
\providecommand \BibitemShut  [1]{\csname bibitem#1\endcsname}%
\let\auto@bib@innerbib\@empty
\bibitem [{\citenamefont {Stern}\ \emph {et~al.}(2021)\citenamefont {Stern},
  \citenamefont {Hexner}, \citenamefont {Rocks},\ and\ \citenamefont
  {Liu}}]{stern2021supervised}%
  \BibitemOpen
  \bibfield  {author} {\bibinfo {author} {\bibfnamefont {M.}~\bibnamefont
  {Stern}}, \bibinfo {author} {\bibfnamefont {D.}~\bibnamefont {Hexner}},
  \bibinfo {author} {\bibfnamefont {J.~W.}\ \bibnamefont {Rocks}},\ and\
  \bibinfo {author} {\bibfnamefont {A.~J.}\ \bibnamefont {Liu}},\ }\bibfield
  {title} {\bibinfo {title} {Supervised learning in physical networks: From
  machine learning to learning machines},\ }\href
  {https://doi.org/10.1103/PhysRevX.11.021045} {\bibfield  {journal} {\bibinfo
  {journal} {Physical Review X}\ }\textbf {\bibinfo {volume} {11}},\ \bibinfo
  {pages} {021045} (\bibinfo {year} {2021})}\BibitemShut {NoStop}%
\bibitem [{\citenamefont {Stern}\ and\ \citenamefont
  {Murugan}(2023)}]{stern2023learning}%
  \BibitemOpen
  \bibfield  {author} {\bibinfo {author} {\bibfnamefont {M.}~\bibnamefont
  {Stern}}\ and\ \bibinfo {author} {\bibfnamefont {A.}~\bibnamefont
  {Murugan}},\ }\bibfield  {title} {\bibinfo {title} {Learning without neurons
  in physical systems},\ }\href
  {https://doi.org/10.1146/annurev-conmatphys-040821-113439} {\bibfield
  {journal} {\bibinfo  {journal} {Annual Review of Condensed Matter Physics}\
  }\textbf {\bibinfo {volume} {14}},\ \bibinfo {pages} {417} (\bibinfo {year}
  {2023})}\BibitemShut {NoStop}%
\bibitem [{\citenamefont {Scellier}\ and\ \citenamefont
  {Bengio}(2017)}]{scellier2017equilibrium}%
  \BibitemOpen
  \bibfield  {author} {\bibinfo {author} {\bibfnamefont {B.}~\bibnamefont
  {Scellier}}\ and\ \bibinfo {author} {\bibfnamefont {Y.}~\bibnamefont
  {Bengio}},\ }\bibfield  {title} {\bibinfo {title} {Equilibrium propagation:
  Bridging the gap between energy-based models and backpropagation},\ }\href
  {https://doi.org/10.3389/fncom.2017.00024} {\bibfield  {journal} {\bibinfo
  {journal} {Frontiers in Computational Neuroscience}\ }\textbf {\bibinfo
  {volume} {11}},\ \bibinfo {pages} {24} (\bibinfo {year} {2017})}\BibitemShut
  {NoStop}%
\bibitem [{\citenamefont {Dillavou}\ \emph {et~al.}(2022)\citenamefont
  {Dillavou}, \citenamefont {Stern}, \citenamefont {Liu},\ and\ \citenamefont
  {Durian}}]{dillavou2022demonstration}%
  \BibitemOpen
  \bibfield  {author} {\bibinfo {author} {\bibfnamefont {S.}~\bibnamefont
  {Dillavou}}, \bibinfo {author} {\bibfnamefont {M.}~\bibnamefont {Stern}},
  \bibinfo {author} {\bibfnamefont {A.~J.}\ \bibnamefont {Liu}},\ and\ \bibinfo
  {author} {\bibfnamefont {D.~J.}\ \bibnamefont {Durian}},\ }\bibfield  {title}
  {\bibinfo {title} {Demonstration of decentralized physics-driven learning},\
  }\href {https://doi.org/10.1103/PhysRevApplied.18.014040} {\bibfield
  {journal} {\bibinfo  {journal} {Physical Review Applied}\ }\textbf {\bibinfo
  {volume} {18}},\ \bibinfo {pages} {014040} (\bibinfo {year}
  {2022})}\BibitemShut {NoStop}%
\bibitem [{\citenamefont {Dillavou}\ \emph {et~al.}(2024)\citenamefont
  {Dillavou}, \citenamefont {Beyer}, \citenamefont {Stern}, \citenamefont
  {Liu}, \citenamefont {Miskin},\ and\ \citenamefont
  {Durian}}]{dillavou2024machine}%
  \BibitemOpen
  \bibfield  {author} {\bibinfo {author} {\bibfnamefont {S.}~\bibnamefont
  {Dillavou}}, \bibinfo {author} {\bibfnamefont {B.~D.}\ \bibnamefont {Beyer}},
  \bibinfo {author} {\bibfnamefont {M.}~\bibnamefont {Stern}}, \bibinfo
  {author} {\bibfnamefont {A.~J.}\ \bibnamefont {Liu}}, \bibinfo {author}
  {\bibfnamefont {M.~Z.}\ \bibnamefont {Miskin}},\ and\ \bibinfo {author}
  {\bibfnamefont {D.~J.}\ \bibnamefont {Durian}},\ }\bibfield  {title}
  {\bibinfo {title} {Machine learning without a processor: Emergent learning in
  a nonlinear analog network},\ }\href
  {https://doi.org/10.1073/pnas.2319718121} {\bibfield  {journal} {\bibinfo
  {journal} {Proceedings of the National Academy of Sciences}\ }\textbf
  {\bibinfo {volume} {121}},\ \bibinfo {pages} {e2319718121} (\bibinfo {year}
  {2024})}\BibitemShut {NoStop}%
\bibitem [{\citenamefont {Zhong}\ \emph {et~al.}(2021)\citenamefont {Zhong},
  \citenamefont {Gold}, \citenamefont {Marzen}, \citenamefont {England},\ and\
  \citenamefont {Yunger~Halpern}}]{zhong2021machine}%
  \BibitemOpen
  \bibfield  {author} {\bibinfo {author} {\bibfnamefont {W.}~\bibnamefont
  {Zhong}}, \bibinfo {author} {\bibfnamefont {J.~M.}\ \bibnamefont {Gold}},
  \bibinfo {author} {\bibfnamefont {S.}~\bibnamefont {Marzen}}, \bibinfo
  {author} {\bibfnamefont {J.~L.}\ \bibnamefont {England}},\ and\ \bibinfo
  {author} {\bibfnamefont {N.}~\bibnamefont {Yunger~Halpern}},\ }\bibfield
  {title} {\bibinfo {title} {Machine learning outperforms thermodynamics in
  measuring how well a many-body system learns a drive},\ }\href
  {https://doi.org/10.1038/s41598-021-88311-7} {\bibfield  {journal} {\bibinfo
  {journal} {Scientific Reports}\ }\textbf {\bibinfo {volume} {11}},\ \bibinfo
  {pages} {9333} (\bibinfo {year} {2021})}\BibitemShut {NoStop}%
\bibitem [{\citenamefont {McGinnis}\ \emph
  {et~al.}(2026{\natexlab{a}})\citenamefont {McGinnis}, \citenamefont {Li},\
  and\ \citenamefont {Mori}}]{mcginnis2026coercivity}%
  \BibitemOpen
  \bibfield  {author} {\bibinfo {author} {\bibfnamefont {J.~A.}\ \bibnamefont
  {McGinnis}}, \bibinfo {author} {\bibfnamefont {X.}~\bibnamefont {Li}},\ and\
  \bibinfo {author} {\bibfnamefont {Y.}~\bibnamefont {Mori}},\ }\href@noop {}
  {\bibinfo {title} {Coercivity and local convergence of physical learning in
  linear circuits}} (\bibinfo {year} {2026}{\natexlab{a}}),\ \Eprint
  {https://arxiv.org/abs/2606.15443} {arXiv:2606.15443 [math.OC]} \BibitemShut
  {NoStop}%
\bibitem [{\citenamefont {McGinnis}\ \emph
  {et~al.}(2026{\natexlab{b}})\citenamefont {McGinnis}, \citenamefont {Kline},\
  and\ \citenamefont {Mori}}]{mcginnis2026conservation}%
  \BibitemOpen
  \bibfield  {author} {\bibinfo {author} {\bibfnamefont {J.~A.}\ \bibnamefont
  {McGinnis}}, \bibinfo {author} {\bibfnamefont {A.~G.}\ \bibnamefont
  {Kline}},\ and\ \bibinfo {author} {\bibfnamefont {Y.}~\bibnamefont {Mori}},\
  }\href@noop {} {\bibinfo {title} {A conservation law for equilibrium
  propagation and coupled learning}} (\bibinfo {year} {2026}{\natexlab{b}}),\
  \Eprint {https://arxiv.org/abs/2606.15444} {arXiv:2606.15444 [math.OC]}
  \BibitemShut {NoStop}%
\bibitem [{\citenamefont {Chizat}\ \emph {et~al.}(2019)\citenamefont {Chizat},
  \citenamefont {Oyallon},\ and\ \citenamefont {Bach}}]{chizat2019lazy}%
  \BibitemOpen
  \bibfield  {author} {\bibinfo {author} {\bibfnamefont {L.}~\bibnamefont
  {Chizat}}, \bibinfo {author} {\bibfnamefont {E.}~\bibnamefont {Oyallon}},\
  and\ \bibinfo {author} {\bibfnamefont {F.}~\bibnamefont {Bach}},\ }\bibfield
  {title} {\bibinfo {title} {On lazy training in differentiable programming},\
  }in\ \href@noop {} {\emph {\bibinfo {booktitle} {Advances in Neural
  Information Processing Systems (NeurIPS)}}}\ (\bibinfo {year}
  {2019})\BibitemShut {NoStop}%
\bibitem [{\citenamefont {Woodworth}\ \emph {et~al.}(2020)\citenamefont
  {Woodworth}, \citenamefont {Gunasekar}, \citenamefont {Lee}, \citenamefont
  {Moroshko}, \citenamefont {Savarese}, \citenamefont {Golan}, \citenamefont
  {Soudry},\ and\ \citenamefont {Srebro}}]{woodworth2020kernel}%
  \BibitemOpen
  \bibfield  {author} {\bibinfo {author} {\bibfnamefont {B.}~\bibnamefont
  {Woodworth}}, \bibinfo {author} {\bibfnamefont {S.}~\bibnamefont
  {Gunasekar}}, \bibinfo {author} {\bibfnamefont {J.~D.}\ \bibnamefont {Lee}},
  \bibinfo {author} {\bibfnamefont {E.}~\bibnamefont {Moroshko}}, \bibinfo
  {author} {\bibfnamefont {P.}~\bibnamefont {Savarese}}, \bibinfo {author}
  {\bibfnamefont {I.}~\bibnamefont {Golan}}, \bibinfo {author} {\bibfnamefont
  {D.}~\bibnamefont {Soudry}},\ and\ \bibinfo {author} {\bibfnamefont
  {N.}~\bibnamefont {Srebro}},\ }\bibfield  {title} {\bibinfo {title} {Kernel
  and rich regimes in overparametrized models},\ }in\ \href@noop {} {\emph
  {\bibinfo {booktitle} {Conference on Learning Theory (COLT)}}}\ (\bibinfo
  {year} {2020})\BibitemShut {NoStop}%
\bibitem [{\citenamefont {Stern}\ \emph {et~al.}(2024)\citenamefont {Stern},
  \citenamefont {Dillavou}, \citenamefont {Jayaraman}, \citenamefont {Durian},\
  and\ \citenamefont {Liu}}]{stern2024training}%
  \BibitemOpen
  \bibfield  {author} {\bibinfo {author} {\bibfnamefont {M.}~\bibnamefont
  {Stern}}, \bibinfo {author} {\bibfnamefont {S.}~\bibnamefont {Dillavou}},
  \bibinfo {author} {\bibfnamefont {D.}~\bibnamefont {Jayaraman}}, \bibinfo
  {author} {\bibfnamefont {D.~J.}\ \bibnamefont {Durian}},\ and\ \bibinfo
  {author} {\bibfnamefont {A.~J.}\ \bibnamefont {Liu}},\ }\bibfield  {title}
  {\bibinfo {title} {Training self-learning circuits for power-efficient
  solutions},\ }\href {https://doi.org/10.1063/5.0181382} {\bibfield  {journal}
  {\bibinfo  {journal} {APL Machine Learning}\ }\textbf {\bibinfo {volume}
  {2}},\ \bibinfo {pages} {016114} (\bibinfo {year} {2024})}\BibitemShut
  {NoStop}%
\bibitem [{\citenamefont {Stern}\ \emph {et~al.}(2025)\citenamefont {Stern},
  \citenamefont {Guzman}, \citenamefont {Martins}, \citenamefont {Liu},\ and\
  \citenamefont {Balasubramanian}}]{stern2025physical}%
  \BibitemOpen
  \bibfield  {author} {\bibinfo {author} {\bibfnamefont {M.}~\bibnamefont
  {Stern}}, \bibinfo {author} {\bibfnamefont {M.}~\bibnamefont {Guzman}},
  \bibinfo {author} {\bibfnamefont {F.}~\bibnamefont {Martins}}, \bibinfo
  {author} {\bibfnamefont {A.~J.}\ \bibnamefont {Liu}},\ and\ \bibinfo {author}
  {\bibfnamefont {V.}~\bibnamefont {Balasubramanian}},\ }\bibfield  {title}
  {\bibinfo {title} {Physical networks become what they learn},\ }\href
  {https://doi.org/10.1103/PhysRevLett.134.147402} {\bibfield  {journal}
  {\bibinfo  {journal} {Physical Review Letters}\ }\textbf {\bibinfo {volume}
  {134}},\ \bibinfo {pages} {147402} (\bibinfo {year} {2025})}\BibitemShut
  {NoStop}%
\bibitem [{\citenamefont {Guzman}\ \emph {et~al.}(2025)\citenamefont {Guzman},
  \citenamefont {Martins}, \citenamefont {Stern},\ and\ \citenamefont
  {Liu}}]{guzman2025microscopic}%
  \BibitemOpen
  \bibfield  {author} {\bibinfo {author} {\bibfnamefont {M.}~\bibnamefont
  {Guzman}}, \bibinfo {author} {\bibfnamefont {F.}~\bibnamefont {Martins}},
  \bibinfo {author} {\bibfnamefont {M.}~\bibnamefont {Stern}},\ and\ \bibinfo
  {author} {\bibfnamefont {A.~J.}\ \bibnamefont {Liu}},\ }\bibfield  {title}
  {\bibinfo {title} {Microscopic imprints of learned solutions in tunable
  networks},\ }\href {https://doi.org/10.1103/f2hb-c9s1} {\bibfield  {journal}
  {\bibinfo  {journal} {Physical Review X}\ }\textbf {\bibinfo {volume} {15}},\
  \bibinfo {pages} {031056} (\bibinfo {year} {2025})}\BibitemShut {NoStop}%
\bibitem [{\citenamefont {Kunin}\ \emph {et~al.}(2021)\citenamefont {Kunin},
  \citenamefont {Sagastuy-Brena}, \citenamefont {Ganguli}, \citenamefont
  {Yamins},\ and\ \citenamefont {Tanaka}}]{kunin2021neural}%
  \BibitemOpen
  \bibfield  {author} {\bibinfo {author} {\bibfnamefont {D.}~\bibnamefont
  {Kunin}}, \bibinfo {author} {\bibfnamefont {J.}~\bibnamefont
  {Sagastuy-Brena}}, \bibinfo {author} {\bibfnamefont {S.}~\bibnamefont
  {Ganguli}}, \bibinfo {author} {\bibfnamefont {D.~L.~K.}\ \bibnamefont
  {Yamins}},\ and\ \bibinfo {author} {\bibfnamefont {H.}~\bibnamefont
  {Tanaka}},\ }\bibfield  {title} {\bibinfo {title} {Neural mechanics: Symmetry
  and broken conservation laws in deep learning dynamics},\ }in\ \href@noop {}
  {\emph {\bibinfo {booktitle} {International Conference on Learning
  Representations (ICLR)}}}\ (\bibinfo {year} {2021})\BibitemShut {NoStop}%
\bibitem [{\citenamefont {Marcotte}\ \emph {et~al.}(2023)\citenamefont
  {Marcotte}, \citenamefont {Gribonval},\ and\ \citenamefont
  {Peyr{\'e}}}]{marcotte2023abide}%
  \BibitemOpen
  \bibfield  {author} {\bibinfo {author} {\bibfnamefont {S.}~\bibnamefont
  {Marcotte}}, \bibinfo {author} {\bibfnamefont {R.}~\bibnamefont
  {Gribonval}},\ and\ \bibinfo {author} {\bibfnamefont {G.}~\bibnamefont
  {Peyr{\'e}}},\ }\bibfield  {title} {\bibinfo {title} {Abide by the law and
  follow the flow: Conservation laws for gradient flows},\ }in\ \href@noop {}
  {\emph {\bibinfo {booktitle} {Advances in Neural Information Processing
  Systems (NeurIPS)}}}\ (\bibinfo {year} {2023})\BibitemShut {NoStop}%
\end{thebibliography}%

\appendix

\section{Numerical methods}
\label{app:numerics}

\emph{States.} Free, nudged, and reference states are computed from the
KKT systems of the constrained energy minimizations. In linear circuits
these are single bordered linear solves. In diode circuits the co-content
is convex and piecewise quadratic, and states are computed by semismooth
Newton iteration on the equality-constrained KKT system, with tolerance
$10^{-12}$ and warm starts along trajectories. Rectifiers carry a reverse
leakage $\epsilon = 10^{-4}$, which keeps the Hessian nondegenerate on
connected circuits.

\emph{Flows.} The zero-nudge flows are integrated with fourth-order
Runge-Kutta. Step control exploits the theory. For EP and CL the drift of
$K$, exactly conserved in the linear case, serves as the local error
metric. For AL the step is controlled against the exact dissipation rate
$\dot K = -2\Phi^*$ of Proposition~\ref{prop:dissipation}. In nonlinear
circuits, step-doubling error control is used instead. An integration
step that would drive a conductance negative is shortened to the zero
crossing and the edge is removed, following
Ref.~\cite{mcginnis2026conservation}.

\emph{Matched-loss protocol.} Each rule is integrated once per task, with
snapshots taken the first time the training loss crosses each prescribed
level. All comparisons in Sec.~\ref{sec:generalization} are between
snapshots at identical levels. The residual floor quoted there is
$\sqrt{2L}$, the function-space resolution implied by training loss $L$.

\emph{Validation.} The implementation reproduces the closed-form
voltage-divider dynamics of Ref.~\cite{mcginnis2026conservation} to
$10^{-10}$, conserves $K$ along EP and CL to $10^{-12}$ over full
trajectories, and confirms Remark~\ref{rem:equiv} numerically:
single-output EP and CL $\omega$-limits agree to $2\times10^{-9}$ in
relative distance. The nonlinear module reduces exactly to the linear one
when no diodes are present and reproduces the scale invariance of
Proposition~\ref{prop:homogeneity} to $10^{-8}$ in its all-linear control
configuration.

\end{document}